\newtheorem{theorem}{Theorem}
\newtheorem{lemma}[theorem]{Lemma}
\theoremstyle{definition}
\title{On the Number of \\Almost Envy-Free Allocations}
\author{
Warut Suksompong\\University of Oxford
}
\date{\vspace{-3ex}}
\begin{document}

\maketitle

\begin{abstract}
Envy-freeness is a standard benchmark of fairness in resource allocation. 
Since it cannot always be satisfied when the resource consists of indivisible items even when there are two agents, the relaxations \emph{envy-freeness up to one item (EF1)} and \emph{envy-freeness up to any item (EFX)} are often considered.
We establish tight lower bounds on the number of allocations satisfying each of these benchmarks in the case of two agents.
In particular, while there can be as few as two EFX allocations for any number of items, the number of EF1 allocations is always exponential in the number of items.
Our results apply a version of the vertex isoperimetric inequality on the hypercube and help explain the large gap in terms of robustness between the two notions.
\end{abstract}

\section{Introduction}

The allocation of scarce resources to interested agents is a task that arises commonly in our everyday lives.
Indeed, course slots need to be allocated to university students, grant funding to researchers, and personnel to organizations, to name but a few examples.
A chief concern when allocating resources is \emph{fairness}---we want all agents to feel that they receive a fair share of the resources.
Several benchmarks of fairness have been proposed in the fair division literature.
One of the most fundamental benchmarks is \emph{envy-freeness}, which means that every agent should receive their first choice among all of the allocated bundles.
Put differently, no agent should have reason to envy any other agent.

While envy-freeness is a compelling fairness benchmark, it suffers from the setback that it cannot always be satisfied when the resource consists of indivisible items such as books, course slots, or personnel.
This is evident in the simple instance where there are two agents and only one indivisible item to be allocated.
Consequently, two natural relaxations of envy-freeness have been considered.
\emph{Envy-freeness up to one item (EF1)} demands that any envy that one agent has towards another agent can be eliminated by removing a single item of our choice from the latter agent's bundle, while the stronger notion of \emph{envy-freeness up to any item (EFX)} requires that the envy can be eliminated by removing \emph{any} single item from the latter agent's bundle.
An allocation that satisfies EF1 and EFX always exists when the items are divided between two agents with arbitrary valuations over subsets of items \citep{LiptonMaMo04,PlautRo18}.\footnote{\label{footnote:monotonic}The valuations are assumed to be \emph{monotonic}, i.e., an agent's value for a set cannot decrease when an item is added to the set.
This assumption is usually made without loss of generality, since agents can freely dispose of items that they do not like.}

The notions of EF1 and EFX serve as fairness criteria that can always be fulfilled when allocating indivisible items between two agents.
On the one hand, given that EFX provides a stronger fairness guarantee than EF1, one might view it to be the appropriate criterion in this setting.
On the other hand, a series of recent work has shown that EF1 is a more robust criterion than EFX in several ways:
\begin{itemize}
\item An EF1 allocation can be computed using a number of queries that is only logarithmic in the number of items \citep{OhPrSu19}.
On the contrary, computing an EFX allocation takes an exponential number of queries in the worst case \citep{PlautRo18}, and a linear number of queries even when the valuations are additive \citep{OhPrSu19}.
\item There always exists a \emph{balanced} EF1 allocation (that is, the numbers of items that the two agents receive differ by no more than $1$), while an EFX allocation may necessarily be highly unbalanced \citep{KyropoulouSuVo19}.
Moreover, when the agents' valuations are additive, an EF1 allocation fulfilling a set of cardinality constraints can also be found \citep{BiswasBa18}.
\item With additive valuations, an EF1 allocation satisfying the economic efficiency condition of \emph{Pareto optimality} always exists, i.e., no other allocation makes one agent better off without making the other agent worse off \citep{BarmanKrVa18,CaragiannisKuMo16}. On the other hand, a Pareto-optimal EFX allocation may not exist \citep{PlautRo18}.
In addition, requiring EF1 leads to a smaller social welfare loss than EFX in the worst case \citep{BeiLuMa19}.
\item If the items lie on a line, an EF1 allocation in which each agent receives a contiguous block of items is guaranteed to exist \citep{BiloCaFl19,Suksompong19}; this is not the case for EFX.
More generally, when the items lie on a graph, the existence of an EF1 allocation in which every agent receives a connected subgraph can be guaranteed for a large class of graphs \citep{BiloCaFl19}, whereas an EFX allocation always exists only if the graph is complete \citep{BeiIgLu19}.
\item Beyond the setting of two agents, an EF1 allocation can always be found for any number of agents \citep{LiptonMaMo04}, whereas the existence question for EFX remains intriguingly open when there are at least four agents with additive valuations, or three agents with non-additive valuations \citep{PlautRo18,CaragiannisKuMo16,AmanatidisBiFi20,AmanatidisNtMa20,ChaudhuryGaMe20}.
\end{itemize}
In this note, we provide an explanation to the robustness of EF1 by deriving tight lower bounds on the number of EF1 and EFX allocations for arbitrary valuations of the two agents.
Specifically, while there can be as few as two EFX allocations regardless of the number of items, the number of EF1 allocations is---quite surprisingly---always exponential in the number of items.\footnote{We also remark that several extensions of EF1 have been recently studied, including \emph{weighted EF1} \citep{ChakrabortyIgSu20}, \emph{typewise EF1} \citep{BenabbouChEl19}, and \emph{democratic EF1} \citep{SegalhaleviSu19}.}

\section{The Bounds}

In our formal model, there are two agents, who we sometimes call Alice and Bob, and a set $M$ of $m\geq 1$ items. 
A subset of $M$ is referred to as a \emph{bundle}.
For $i\in\{1,2\}$, agent $i$ has a nonnegative value $u_i(M')$ for each bundle $M'\subseteq M$.
We assume without loss of generality that $u_i(\emptyset) = 0$ and, as is commonly done, that the valuations are \emph{monotonic}, i.e., $u_i(A)\leq u_i(B)$ for any $A\subseteq B\subseteq M$.
A bundle $M'$ is said to be
\begin{itemize}
\item \emph{envy-free up to one item (EF1)} for agent $i$, if  either $u_i(M')\geq u_i(M\backslash M')$, or there exists an item $j\in M\backslash M'$ such that $u_i(M') \geq u_i(M\backslash(M'\cup\{j\}))$;
\item \emph{envy-free up to any item (EFX)} for agent $i$, if for each item $j\in M\backslash M'$, we have $u_i(M') \geq u_i(M\backslash(M'\cup\{j\}))$.
\end{itemize}

An \emph{allocation} is an ordered partition of $M$ into two sets $(M_1,M_2)$, where bundle $M_i$ is given to agent $i$.
The allocation is said to be EF1 (resp. EFX) if for each $i\in\{1,2\}$, bundle $M_i$ is EF1 (resp. EFX) for agent $i$.
Moreover, we say that an unordered partition of $M$ into two sets $(M',M'')$ is EF1 (resp. EFX) for agent $i$ if both $M'$ and $M''$ are EF1 (resp. EFX) for the agent.

We now derive the lower bounds on the number of EF1 and EFX allocations.
We start with the stronger notion, EFX.
For any number of agents with identical valuations, \cite{PlautRo18} showed that an EFX allocation always exists.
This implies that in our setting with two agents, each agent has an EFX partition of the items, a fact that will be useful for our proof.

\begin{theorem}
For two agents with arbitrary monotonic valuations, the number of EFX allocations is at least $2$.
Moreover, this bound is tight for any number of items $m$.
\end{theorem}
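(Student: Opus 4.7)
The plan for the lower bound is to use the fact, recalled in the excerpt just before the theorem, that each agent $i$ possesses an \emph{EFX partition}: an unordered partition of $M$ into two parts both of which are EFX for agent $i$. Let $\mathcal{P}_i$ denote the family of such partitions for agent $i$; by the recalled result, both $\mathcal{P}_1$ and $\mathcal{P}_2$ are nonempty. My plan is to split on whether these two families share a common element. If some $\{A,B\}$ lies in $\mathcal{P}_1 \cap \mathcal{P}_2$, then both ordered allocations $(A,B)$ and $(B,A)$ are EFX, because in each case the bundle assigned to agent $i$ is EFX for agent $i$. This already produces the required two distinct EFX allocations.

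If instead $\mathcal{P}_1 \cap \mathcal{P}_2 = \emptyset$, I would apply cut-and-choose twice. Pick any $\{A,B\} \in \mathcal{P}_1$ and let agent 2 take the part with the larger $u_2$-value, say $A$; the resulting allocation $(B,A)$ is EFX because $B$ is EFX for agent 1 by choice of partition, while agent 2 is envy-free (hence EFX). By symmetry, choosing any $\{C,D\} \in \mathcal{P}_2$ and letting agent 1 pick yields a second EFX allocation. The two underlying unordered partitions come from disjoint families, so they differ, and consequently the two ordered EFX allocations are distinct.

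For tightness, my candidate instance is identical additive valuations with $u_1(a_j) = u_2(a_j) = 2^{j-1}$ on items $a_1, \ldots, a_m$. I would show that the only EFX unordered partition is $\{M \setminus \{a_m\}, \{a_m\}\}$, yielding exactly two ordered EFX allocations. Suppose for contradiction that some EFX partition places $a_m$ in a bundle $T$ with $|T| \geq 2$, and let $a_k$ be the next-largest element of $T$. For the agent holding $S = M \setminus T$, EFX requires $u(S) \geq u(T \setminus \{a_k\})$, but the right-hand side is at least $2^{m-1}$ since $a_m$ still belongs to $T \setminus \{a_k\}$, whereas $u(S) \leq 2^{m-1} - 1$ by the super-increasing property of the values, a contradiction. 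Thus $T = \{a_m\}$, and a short check shows that both parts of $\{M \setminus \{a_m\}, \{a_m\}\}$ are EFX.

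The main delicate point is guaranteeing distinctness in the cut-and-choose case: a priori, the two allocations produced from letting each agent choose could coincide, and each single unordered partition might give only one EFX ordering rather than two. The split on whether $\mathcal{P}_1 \cap \mathcal{P}_2$ is empty is what makes this clean: either a shared partition hands us two orderings directly, or the disjointness of $\mathcal{P}_1$ and $\mathcal{P}_2$ forces the two cut-and-choose partitions to differ. The tightness direction is then a short verification once the unique EFX partition is identified.
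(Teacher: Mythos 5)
Your proof is correct and follows essentially the same route as the paper: each agent has an EFX partition by the Plaut--Roughgarden result, a case split on whether the partitions can be taken to coincide yields two distinct EFX allocations via cut-and-choose, and tightness comes from an identical additive instance in which one item's value dominates the rest (the paper uses values $1,\dots,1,m$ instead of powers of two, but the argument is the same).
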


\begin{proof}
We let each agent propose an (unordered) EFX partition.
If the two proposed partitions coincide, we have two EFX allocations by assigning either bundle to either agent.
If they differ, letting Bob choose a preferred bundle from Alice's partition yields an EFX allocation, and letting Alice choose a preferred bundle from Bob's partition yields another EFX allocation different from the first one.
Hence we can find two EFX allocations in both cases.

To show that the bound is tight, assume that the agents have identical valuations with value $1$ for each of the first $m-1$ items, and value $m$ for the last item. 
The valuations are additive: the value for a bundle of items is simply the sum of the values of the individual items in the bundle.
It is clear that all of the first $m-1$ items must be given to the agent who does not receive the last item in order for the allocation to be EFX.
Hence there are only two EFX allocations in this instance.
\end{proof}

Next, we move on to EF1, for which deriving a tight bound is considerably more challenging.

\begin{theorem}
\label{thm:EF1}
For two agents with arbitrary monotonic valuations, the number of EF1 allocations is at least $f_{\text{EF1}}(m)$, where
\[
f_{\text{EF1}}(m) := 
\begin{cases}
\dbinom{m}{m/2} & \text{ if } m \text{ is even;} \\
2\cdot\dbinom{m-1}{(m-1)/2} & \text{ if } m \text{ is odd.}
\end{cases}
\]
Moreover, this bound is tight for any number of items $m$.
\end{theorem}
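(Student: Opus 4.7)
The plan is to reformulate the counting problem on the hypercube $\{0,1\}^m$. Identify each allocation $(M_1, M_2)$ with the vertex $M_1 \subseteq M$, and let $A_i \subseteq 2^M$ denote the family of bundles that are EF1 for agent $i$. By monotonicity $A_i$ is an up-set, and the number of EF1 allocations equals $|A_1 \cap \phi(A_2)|$, where $\phi(X) := M \setminus X$.

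The first key observation I would establish is a covering property: for any agent $i$ and any $M' \subseteq M$, at least one of $M'$ and $M \setminus M'$ is EF1 for agent $i$. Indeed, the side of larger value is envy-free; and if $u_i(M') \leq u_i(M \setminus M')$, then $u_i(M \setminus M') \geq u_i(M') \geq u_i(M' \setminus \{j\})$ for every $j \in M'$ by monotonicity, so $M \setminus M'$ is EF1. Thus $A_i \cup \phi(A_i) = 2^M$, which yields the identity $|A_i| = 2^{m-1} + \frac{1}{2}|A_i \cap \phi(A_i)|$; in particular $|A_i| \geq 2^{m-1}$.

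The proof then reduces to the single-agent lemma: for any monotonic valuation on $m$ items, the number of unordered partitions of $M$ in which both parts are EF1 for that agent is at least $\binom{m-1}{\lfloor (m-1)/2 \rfloor}$. Equivalently, $|A_i| \geq 2^{m-1} + \binom{m-1}{\lfloor (m-1)/2 \rfloor}$. Granted this lemma, inclusion-exclusion on $2^M$ gives
\[
|A_1 \cap \phi(A_2)| \geq |A_1| + |A_2| - 2^m \geq 2\binom{m-1}{\lfloor (m-1)/2 \rfloor} = f_{\text{EF1}}(m),
\]
as desired.

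The main obstacle is the single-agent lemma. Monotonicity combined with covering is not enough on its own: for odd $m$, the up-set $\{s \in \{0,1\}^m : |s| \geq \lceil m/2 \rceil\}$ has size exactly $2^{m-1}$ but empty self-intersection under $\phi$. The additional strength must come from the fact that $A_i$ actually arises from a monotonic valuation, and I would prove the lemma by applying a version of Harper's vertex isoperimetric inequality on the hypercube, showing that any such EF1-set necessarily contains enough elements of the middle layer(s) to force the required overlap. For tightness, I would verify two constructions: when $m$ is even, take all item values identically $1$, so EF1 forces $|M_1| = m/2$ and yields $\binom{m}{m/2}$ allocations; when $m$ is odd, take items $1,\ldots,m-1$ of value $1$ and item $m$ of value $\varepsilon \in (0,1)$, so EF1 forces $|M_1 \cap \{1,\ldots,m-1\}| = (m-1)/2$ with item $m$ freely assigned, yielding $2\binom{m-1}{(m-1)/2}$ allocations.
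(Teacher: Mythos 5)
Your reduction is sound and, modulo phrasing, is the same as the paper's: the covering property (one side of every partition is envy-free, hence EF1), the identity $|A_i| = 2^{m-1} + \tfrac{1}{2}|A_i \cap \phi(A_i)|$, and the inclusion--exclusion step $|A_1 \cap \phi(A_2)| \geq |A_1| + |A_2| - 2^m$ together correctly reduce the theorem to the single-agent lemma that at least $\binom{m-1}{\lfloor (m-1)/2\rfloor}$ unordered partitions are EF1 for each agent (the paper phrases this as merging two lists of partitions, which is the same count). Your tightness constructions also work. The problem is that the single-agent lemma is the entire difficulty of the theorem, and you have only gestured at it. Two concrete things are missing. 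First, to apply any isoperimetric inequality you need a quantitative separation between the bad sets: the paper proves that if $A$ is ``too small'' (not EF1 but its complement is) and $B$ is ``too large'' (the complementary notion), then $d(A,B) \geq 2$, via the observation that adding one item to a too-small bundle cannot make it too large. Without this Hamming-distance-$2$ claim, Harper-type compression gives you nothing; you never state or prove it.

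Second, and more seriously, Harper's inequality alone does not close the odd case. After compressing the two bad families to Hamming balls $\mathcal{A}_0$ (center $M$) and $\mathcal{B}_0$ (center $\emptyset$) of equal size with $d(\mathcal{A}_0,\mathcal{B}_0)\geq 2$, write $m = 2s+1$ and let $r$ be the radius. The cases $r \geq s+1$ and $r \leq s-1$ are easy, but when $r = s$ the balls may eat into the two middle layers, and one must show that $\mathcal{A}_0$ contains at most $\binom{m-1}{s-1}$ bundles of size $s+1$ (and symmetrically for $\mathcal{B}_0$). The paper does this by observing that the offending bundles form a Sperner family and invoking Bj\"orner's refinement of Sperner's theorem together with a separate induction showing that the shadow function $\partial_{k-1}(n)$ is non-decreasing in $n$. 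This is roughly half of the paper's proof and is not ``a version of Harper's inequality''; your sketch that the EF1-set ``necessarily contains enough elements of the middle layer(s)'' is precisely the claim that needs proving, and it fails for general up-sets satisfying only the covering property (as your own example $\{S : |S| \geq \lceil m/2\rceil\}$ shows), so the argument must use the distance-$2$ separation in an essential and nontrivial way.
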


To prove this theorem, we will rely on two combinatorial results.
The first result is a version of the vertex isoperimetric problem on the hypercube, where we want to choose a certain number of vertices on a hypercube so that the ``vertex boundary'' of this set is as small as possible.
For a finite set $X$, denote its power set by $2^X$.
A \emph{set system} is a subset of $2^X$.
The \emph{Hamming distance} between two sets $A,B\subseteq X$ is the size of their symmetric difference: 
$$d(A,B) = |(A\backslash B)\cup(B\backslash A)|,$$
and the Hamming distance between two nonempty set systems $\mathcal{A},\mathcal{B}\subseteq 2^X$ is
$$d(\mathcal{A},\mathcal{B}) = \min_{A\in\mathcal{A},B\in\mathcal{B}}d(A,B).$$
\emph{The Hamming ball} of center $A\subseteq X$ and radius $r\in\mathbb{N}\cup\{0\}$ is
$$H_r(A) := \{B\subseteq X\mid d(A,B)\leq r\}.$$
A set system $\mathcal{B}\subseteq 2^X$ is called \emph{a Hamming ball} of center $A\subseteq X$ and radius $r\in \mathbb{N}$ if
$$H_{r-1}(A)\subseteq\mathcal{B}\subseteq H_r(A).$$
Note the difference between ``the Hamming ball'' and ``a Hamming ball''.

\begin{lemma}[\cite{Bollobas86,Calabro04}]
\label{lem:hamming}
Let $X$ be a finite set, and let $\mathcal{A},\mathcal{B}\subseteq 2^X$ be nonempty set systems. 
There exist a Hamming ball $\mathcal{A}_0$ with center $X$ and a Hamming ball $\mathcal{B}_0$ with center $\emptyset$ such that $|\mathcal{A}_0|=|\mathcal{A}|$, $|\mathcal{B}_0|=|\mathcal{B}|$, and $d(\mathcal{A}_0,\mathcal{B}_0)\geq d(\mathcal{A},\mathcal{B})$.
\end{lemma}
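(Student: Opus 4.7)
My plan is to prove the lemma by an iterated shifting argument. For each $i \in X$, introduce two monotone shifts. The \emph{up-shift} $U_i$ replaces $A \in \mathcal{A}$ by $A \cup \{i\}$ whenever $i \notin A$ and $A \cup \{i\} \notin \mathcal{A}$; otherwise it leaves $A$ alone. The \emph{down-shift} $D_i$ dually replaces $B \in \mathcal{B}$ by $B \setminus \{i\}$ when $i \in B$ and $B \setminus \{i\} \notin \mathcal{B}$. Both shifts act as identity or swap on each pair $\{A, A \triangle \{i\}\}$, so $|U_i(\mathcal{A})| = |\mathcal{A}|$ and $|D_i(\mathcal{B})| = |\mathcal{B}|$. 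Iterating over all coordinates terminates because the quantity $\sum_{A \in \mathcal{A}}|A| - \sum_{B \in \mathcal{B}}|B|$ is bounded and strictly increases under each nontrivial shift.

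The technical heart is the claim $d(U_i(\mathcal{A}), D_i(\mathcal{B})) \geq d(\mathcal{A}, \mathcal{B})$. Given $A^* \in U_i(\mathcal{A})$ and $B^* \in D_i(\mathcal{B})$ achieving the new minimum distance, I exhibit $(A, B) \in \mathcal{A} \times \mathcal{B}$ with $d(A, B) \leq d(A^*, B^*)$ by case analysis on membership of $i$. If both $A^* \in \mathcal{A}$ and $B^* \in \mathcal{B}$, take $(A, B) = (A^*, B^*)$. If $A^*$ was freshly created by $U_i$ (so $i \in A^*$ and $A^* \setminus \{i\} \in \mathcal{A}$) while $i \in B^*$, then $B^*$ survives $D_i$ only because $B^* \setminus \{i\} \in \mathcal{B}$, and the pair $(A^* \setminus \{i\}, B^* \setminus \{i\})$ works at equal distance. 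The delicate case is when both are freshly moved with $i \in A^*$ but $i \notin B^*$: then $A^* \setminus \{i\} \in \mathcal{A}$ and $B^* \cup \{i\} \in \mathcal{B}$, and $(A^* \setminus \{i\}, B^* \cup \{i\})$ has the same symmetric difference as $(A^*, B^*)$, because the contribution of coordinate $i$ to the difference is unchanged when it is removed from one side and added to the other. The remaining subcases are handled symmetrically.

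Once the shifts stabilise, $\mathcal{A}$ is an up-set and $\mathcal{B}$ is a down-set, with the original sizes and distance at least $d(\mathcal{A}, \mathcal{B})$. The main obstacle, which I expect to be the deepest part of the argument, is that an up-set need not be a Hamming ball around $X$: its minimal sets may span several layers of the cube, while a Hamming ball is supported on at most two consecutive layers. To complete the proof, I would reformulate the distance condition as $\mathcal{A} \cap N_{d-1}(\mathcal{B}) = \emptyset$, where $N_{d-1}(\mathcal{B}) = \bigcup_{B \in \mathcal{B}} H_{d-1}(B)$, and then invoke the vertex isoperimetric inequality on the hypercube: among set systems of a given size, the Hamming ball around $\emptyset$ minimises the size of the $(d-1)$-neighborhood, and this neighborhood is itself a Hamming ball around $\emptyset$. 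Replacing $\mathcal{B}$ by such a Hamming ball $\mathcal{B}_0$ only shrinks $N_{d-1}$, so its complement — a Hamming ball around $X$ — has size at least $|\mathcal{A}|$ and contains some Hamming ball $\mathcal{A}_0$ of size exactly $|\mathcal{A}|$. The pair $(\mathcal{A}_0, \mathcal{B}_0)$ then satisfies $d(\mathcal{A}_0, \mathcal{B}_0) \geq d$ by construction. The deepest work is in the isoperimetric step itself, which requires finer compressions that rearrange sets within a layer (for instance, $A \mapsto (A \setminus \{j\}) \cup \{i\}$ for $i < j$), since the up/down shifts alone cannot force an arbitrary up-set into Hamming-ball shape, and verifying that such finer compressions preserve Hamming distance to $\mathcal{B}$ is where the most intricate case analysis lies.
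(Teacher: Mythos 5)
The paper does not prove this lemma at all---it is imported from Bollob\'as and Calabro---so there is no in-paper argument to compare against; your outline is essentially the standard proof from those sources. The parts you carry out in detail are correct: the shifts $U_i,D_i$ preserve cardinalities, the case analysis establishing $d(U_i(\mathcal{A}),D_i(\mathcal{B}))\geq d(\mathcal{A},\mathcal{B})$ is sound (including the key subcase where both sets are freshly moved, since coordinate $i$ contributes $1$ to the symmetric difference both before and after the swap), and the potential function $\sum_{A\in\mathcal{A}}|A|-\sum_{B\in\mathcal{B}}|B|$ gives termination. Two remarks. First, as structured, the entire shifting stage is redundant: the reduction in your final paragraph---namely $\mathcal{A}\cap N_{d-1}(\mathcal{B})=\emptyset$, Harper's theorem applied to shrink $N_{d-1}(\mathcal{B})$ to $N_{d-1}(\mathcal{B}_0)$, and the observation that the complement of a Hamming ball about $\emptyset$ is a Hamming ball about $X$ containing a sub-ball of any prescribed smaller size---applies directly to the original families, which need not be up- or down-sets (the degenerate case $d(\mathcal{A},\mathcal{B})=0$ should be disposed of separately, where the conclusion is trivial). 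Second, and more substantively, the proposal is a reduction rather than a self-contained proof: all of the difficulty is concentrated in Harper's vertex isoperimetric theorem in the strong form you invoke, i.e., that the initial segment of the simplicial order minimizes the size of the $1$-neighborhood among families of its size \emph{and} that the neighborhood of such an initial segment is again an initial segment, so that the bound iterates to $N_{d-1}$. You correctly identify this as the deepest step and correctly anticipate that it requires compressions finer than $U_i,D_i$, but you do not supply it. Citing Harper's theorem is a legitimate way to close the argument---this is essentially how Calabro derives the two-family statement---but as written the hard kernel of the lemma is assumed rather than proved.
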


The second result that we will use in our proof concerns \emph{Sperner families}, i.e., set systems in which no set contains another.
A famous theorem of \cite{Sperner28} states that if $|X|=n$, a Sperner family in $2^X$ can have size at most $\binom{n}{\lfloor n/2\rfloor}$.
\citet[Thm.~2.2]{Bjorner86} provided more precise information regarding the number of sets of different sizes that can appear in a Sperner family.
For any positive integers $n$ and $k$, there is a unique way of writing
$$
n = \binom{a_k}{k} + \binom{a_{k-1}}{k-1}+\dots+\binom{a_i}{i}
$$
so that $a_k>a_{k-1}>\dots>a_i\geq i\geq 1$ are integers. 
We define
$$
\partial_{k-1}(n) = \binom{a_k}{k-1} + \binom{a_{k-1}}{k-2}+\dots+\binom{a_i}{i-1}
$$
and let $\partial_{k-1}(0) = 0$.

\begin{lemma}[\cite{Bjorner86}]
\label{lem:sperner}
Let $X$ be a set with $|X|=n$, and let $c_0,c_1,\dots,c_{n-1}$ be nonnegative integers such that at least one is strictly positive.
There exists a Sperner family in $2^X$ with exactly $c_i$ members of size $i+1$ for $i=0,1,\dots,n-1$ if and only if
$$
\partial_{j+1}(\cdots\partial_{n-2}(\partial_{n-1}(c_{n-1})+c_{n-2})+c_{n-3}\cdots)+c_j \leq \binom{n}{j+1},
$$
where $j$ is the smallest index such that $c_j \neq 0$.
\end{lemma}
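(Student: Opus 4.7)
The plan is to deduce the statement from the \emph{Kruskal--Katona theorem}, which asserts that any family $\mathcal{F}$ of $k$-subsets of $X$ with $|\mathcal{F}|=N$ has at least $\partial_{k-1}(N)$ distinct $(k-1)$-subsets contained in some member of $\mathcal{F}$; moreover, when $\mathcal{F}$ is the initial segment of colex order among the $k$-subsets of $X$, this bound is tight and the collection of those $(k-1)$-subsets is itself the initial colex segment at level $k-1$. Define an auxiliary sequence by $b_n := c_{n-1}$ and $b_\ell := \partial_\ell(b_{\ell+1}) + c_{\ell-1}$ for $\ell = n-1, n-2, \dots, j+1$, so that the displayed inequality in the lemma is exactly $b_{j+1} \leq \binom{n}{j+1}$.

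For \emph{necessity}, suppose $\mathcal{F}$ is a Sperner family realizing the prescribed profile and write $\mathcal{F}_\ell$ for its members of size $\ell$. I would prove by downward induction on $\ell$ that the collection $\mathcal{T}_\ell$ of $\ell$-subsets of $X$ contained in some member of $\mathcal{F}$ of size at least $\ell$ satisfies $|\mathcal{T}_\ell| \geq b_\ell$. The base case $\ell = n$ is immediate. For $\ell < n$, Kruskal--Katona applied to the uniform family $\mathcal{T}_{\ell+1}$ at level $\ell+1$ yields at least $\partial_\ell(b_{\ell+1})$ many $\ell$-subsets each strictly contained in some higher-level member of $\mathcal{F}$; the Sperner condition forces them to be disjoint from $\mathcal{F}_\ell$, which contributes $c_{\ell-1}$ more, giving $|\mathcal{T}_\ell| \geq b_\ell$. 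Taking $\ell = j+1$ and noting that $\mathcal{T}_{j+1}$ is a collection of $(j+1)$-subsets of $X$ yields $b_{j+1} \leq \binom{n}{j+1}$.

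For \emph{sufficiency}, I would construct the desired Sperner family explicitly by working top-down with initial colex segments. At each level $\ell = n, n-1, \dots, j+1$, let $\mathcal{S}_\ell$ denote the initial colex segment among the $\ell$-subsets of $X$ of size $b_\ell$, and set $\mathcal{F}_\ell$ equal to the last $c_{\ell-1}$ elements of $\mathcal{S}_\ell$. The assumption $b_{j+1} \leq \binom{n}{j+1}$ ensures $\mathcal{S}_{j+1}$ fits into the ground set, and a short check confirms the intermediate $b_\ell$'s do not exceed $\binom{n}{\ell}$ either. By the tightness part of Kruskal--Katona, the initial portion of $\mathcal{S}_\ell$ of size $\partial_\ell(b_{\ell+1})$ is precisely the collection of $\ell$-subsets contained in members of $\mathcal{S}_{\ell+1}$, so no element of $\mathcal{F}_\ell$ is a subset of any element of $\mathcal{F}_{\ell+1} \subseteq \mathcal{S}_{\ell+1}$; iterating shows $\mathcal{F} := \bigcup_\ell \mathcal{F}_\ell$ is Sperner with the prescribed profile. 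The main obstacle is handling the equality case of Kruskal--Katona cleanly: one must verify both that the shadow of an initial colex segment is itself an initial colex segment of the correct size, and that the auxiliary $b_\ell$'s remain admissible throughout the recursion, which together make the top-down construction consistent at every level and align the necessity bound with the sufficiency construction.
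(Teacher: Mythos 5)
The paper does not actually prove Lemma~\ref{lem:sperner}; it is imported without proof from \cite{Bjorner86} (the characterization goes back to Clements and to Daykin, Godfrey, and Hilton), so there is no in-paper argument to compare against. Your derivation from the Kruskal--Katona theorem is the standard proof of this result, and the outline is sound in both directions: the downward induction on the quantities $b_\ell$ for necessity, and the nested initial colex segments for sufficiency. Three places in the sketch carry real content and should be made explicit. First, in the inductive step of necessity, Kruskal--Katona gives $|\Delta(\mathcal{T}_{\ell+1})|\geq\partial_{\ell}(|\mathcal{T}_{\ell+1}|)$, but to conclude $|\mathcal{T}_\ell|\geq b_\ell$ you also need $\partial_{\ell}$ to be non-decreasing, so that $\partial_{\ell}(|\mathcal{T}_{\ell+1}|)\geq\partial_{\ell}(b_{\ell+1})$; this monotonicity is not free (the paper proves it separately, by induction on $k$, inside the proof of Theorem~\ref{thm:EF1}). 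Second, the ``short check'' that the intermediate $b_\ell$ do not exceed $\binom{n}{\ell}$ is a contrapositive cascade: if $b_\ell>\binom{n}{\ell}$ for some $\ell\geq 2$, then
$$
b_{\ell-1}\;\geq\;\partial_{\ell-1}(b_\ell)\;\geq\;\partial_{\ell-1}\!\left(\binom{n}{\ell}+1\right)\;=\;\binom{n}{\ell-1}+(\ell-1)\;>\;\binom{n}{\ell-1},
$$
and iterating down to level $j+1$ contradicts the hypothesis; this again uses monotonicity of $\partial$. Third, the Sperner property of $\bigcup_\ell\mathcal{F}_\ell$ must be checked across all pairs of levels, not only adjacent ones; it does hold, because the iterated shadow of $\mathcal{S}_{\ell'}$ down to level $\ell$ lands inside the initial colex segment of size $\partial_{\ell}(b_{\ell+1})$, which is precisely the portion of $\mathcal{S}_\ell$ excluded from $\mathcal{F}_\ell$, but the phrase ``iterating shows'' conceals this nesting argument. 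With these points filled in, your proof is complete and correct.
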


With Lemmas~\ref{lem:hamming} and \ref{lem:sperner} in hand, we are now ready to derive the lower bound on the number of EF1 allocations.

\begin{proof}[Proof of Theorem~\ref{thm:EF1}]
We first show tightness of the bound. 
Assume that the agents have identical valuations, and the value for a bundle of items is simply the sum of the values of the individual items in the bundle.
If $m$ is even, suppose that the agents have value $1$ for every item. 
The EF1 allocations are exactly the allocations that assign $m/2$ items to each agent, so there are exactly $\binom{m}{m/2}$ EF1 allocations.
For $m$ odd, suppose that the agents have value $1$ for each of the first $m-1$ items, and value $0$ for the last item.
The first $m-1$ items must be split equally between the two agents, while the last item can go to either agent.
Hence there are exactly $2\cdot\binom{m-1}{(m-1)/2}$ EF1 allocations.

Next, we proceed to establish the bound.
We claim that for each agent, at least $\frac{1}{2}\cdot f_{\text{EF1}}(m)$ partitions are EF1.
Observe that this claim immediately yields the desired result.
To see this, list $\frac{1}{2}\cdot f_{\text{EF1}}(m)$ partitions that are EF1 for Alice, and $\frac{1}{2}\cdot f_{\text{EF1}}(m)$ partitions that are EF1 for Bob.
For a partition that appears in both lists, we can allocate either part to either agent, giving rise to two EF1 allocations.
For a partition that appears in only one list, we let the other agent choose a preferred bundle from the partition, giving rise to one EF1 allocation.
Hence the number of EF1 allocations is at least $\frac{1}{2}\cdot f_{\text{EF1}}(m) + \frac{1}{2}\cdot f_{\text{EF1}}(m) = f_{\text{EF1}}(m)$.

We now restrict our attention to Alice and prove that at least $\frac{1}{2}\cdot f_{\text{EF1}}(m)$ partitions are EF1 for her.
Recall that $M$ denotes the set of all items, and call a bundle $M'\subseteq M$ ``good'' if the partition $(M',M\backslash M')$ is EF1 for Alice, and ``bad'' otherwise.
Our goal is equivalent to showing that at least $f_{\text{EF1}}(m)$ bundles are good.
A bundle can be bad for one of two reasons: either it is ``too large'' (i.e., it is EF1, but its complement is not), or it is ``too small'' (i.e., it is not EF1, but its complement is).
Note that a bundle is too small if and only if its complement is too large.
In addition, it is easy to check from the definition of EF1 that if we remove an item from a bundle that is too small, then the resulting bundle is, as one should expect, also too small.
Likewise, if we add an item to a bundle that is too large, the resulting bundle remains too large.

Consider bundles of items as elements of $2^M$.
We claim that if bundle $A$ is too small and bundle $B$ is too large, then $d(A,B)\geq 2$.
To prove this claim, it suffices to show that if a bundle is too small, then adding one item to it cannot make it too large.
Assume that bundle $A$ is too small, and let $j\in M\backslash A$.
Denoting Alice's valuation by $u$, since $A$ is too small, we have $u(A)<u(M\backslash(A\cup\{j\}))$.
This is equivalent to $u(M\backslash(A\cup\{j\})) > u((A\cup\{j\})\backslash\{j\})$, which means that $u(M\backslash(A\cup\{j\}))$ is not too small.
Hence $A\cup\{j\}$ is not too large, which establishes our claim.

Let $\mathcal{A}$ (resp. $\mathcal{B}$) be a set system consisting of all bundles that are too small  (resp. too large).
It follows from the preceding paragraph that $d(\mathcal{A},\mathcal{B})\geq 2$.
By Lemma~\ref{lem:hamming}, there exist a Hamming ball $\mathcal{A}_0$ with center $M$ and a Hamming ball $\mathcal{B}_0$ with center $\emptyset$ such that $|\mathcal{A}_0|=|\mathcal{A}|$, $|\mathcal{B}_0|=|\mathcal{B}|$, and $d(\mathcal{A}_0,\mathcal{B}_0)\geq d(\mathcal{A},\mathcal{B})\geq 2$.
Moreover, symmetry implies that $|\mathcal{A}|=|\mathcal{B}|$.
The number of good bundles is $
|2^M\backslash(\mathcal{A}\cup\mathcal{B})|
=
|2^M\backslash(\mathcal{A}_0\cup\mathcal{B}_0)|$.

Assume first that $m$ is even. 
Let $r$ be the unique integer such that $H_{r-1}(M)\subsetneq\mathcal{A}_0\subseteq H_r(M)$.
Since $|\mathcal{A}_0|=|\mathcal{B}_0|$, we have $H_{r-1}(\emptyset)\subsetneq\mathcal{B}_0\subseteq H_r(\emptyset)$.
Note that $H_{r-1}(M)$ contains all bundles of size at least $m-(r-1)$, and $H_{r-1}(\emptyset)$ contains all bundles of size at most $r-1$.
If $r\geq m/2$, then since $H_{r-1}(M)\subsetneq\mathcal{A}_0\subseteq H_r(M)$, we have that $\mathcal{A}_0$ contains at least one bundle of size $m-m/2 = m/2$.
Moreover, since $\mathcal{B}_0$ contains $H_{r-1}(\emptyset)$, it contains \emph{all} bundles of size $m/2-1$.
At least one of these bundles has Hamming distance $1$ from a bundle of size $m/2$ in $\mathcal{A}_0$, a contradiction.
So $r<m/2$, which means that no bundle of size $m/2$ belongs to $\mathcal{A}_0\cup\mathcal{B}_0$.
It follows that the number of good bundles is
$$
|2^M\backslash(\mathcal{A}_0\cup\mathcal{B}_0)|
\geq
\binom{m}{m/2} = f_{\text{EF1}}(m).
$$

For the rest of this proof, assume that $m=2s+1$ is odd.
As in the previous paragraph, let $r$ be such that $H_{r-1}(M)\subsetneq\mathcal{A}_0\subseteq H_r(M)$ and $H_{r-1}(\emptyset)\subsetneq\mathcal{B}_0\subseteq H_r(\emptyset)$.
If $r\geq s+1$, then $\mathcal{A}_0$ contains all bundles of size $s+1$ and $\mathcal{B}_0$ contains all bundles of size $s$, contradicting $d(\mathcal{A}_0,\mathcal{B}_0)\geq 2$.
If $r\leq s-1$, then no bundle of size $s$ or $s+1$ belongs to $\mathcal{A}_0\cup\mathcal{B}_0$.
In this case, the number of good bundles is 
$$
|2^M\backslash(\mathcal{A}_0\cup\mathcal{B}_0)|
\geq
2\binom{m}{s}
\geq
2\binom{m-1}{s}
= f_{\text{EF1}}(m).
$$
Suppose now that $r=s$.
It suffices to prove that $\mathcal{A}_0$ contains at most $\binom{m-1}{s-1}$ bundles of size $s+1$, and $\mathcal{B}_0$ contains at most $\binom{m-1}{s-1}$ bundles of size $s$.
Indeed, this would imply that the number of good bundles is 
$$
|2^M\backslash(\mathcal{A}_0\cup\mathcal{B}_0)|
\geq
2\left(\binom{m}{s} - \binom{m-1}{s-1}\right)
=
2\binom{m-1}{s}
= f_{\text{EF1}}(m).
$$

Assume for contradiction that $\mathcal{A}_0$ contains $t > \binom{m-1}{s-1}$ bundles of size $s+1$.
By symmetry, $\mathcal{B}_0$ contains $t$ bundles of size $s$.
Since $d(\mathcal{A}_0,\mathcal{B}_0)\geq 2$, none of these bundles in $\mathcal{B}_0$ can be a subset of any of the $t$ bundles in $\mathcal{A}_0$.
This means that the $2t$ bundles together form a Sperner family in $2^M$.
Applying Lemma~\ref{lem:sperner}, we find that
$$\binom{m}{s}\geq \partial_s(t) + t > \partial_s(t) + \binom{m-1}{s-1},$$
or $\partial_s(t) < \binom{m}{s} - \binom{m-1}{s-1} =  \binom{m-1}{s}$.
Observe that $\partial_s\left(\binom{m-1}{s-1}\right) = \partial_s\left(\binom{m-1}{s+1}\right) = \binom{m-1}{s}$.
Since $t > \binom{m-1}{s-1}$, in order to obtain the desired contradiction, we only need to show that $\partial_{k-1}(n)$ is a non-decreasing function of $n$ for any fixed $k$.

To prove this statement, we proceed by induction on $k$.
The base case $k=1$ is trivial.
Assume that the statement holds up to $k-1$, and let $n_1> n_2$ be two positive integers.
Write
$$
n_1 = \binom{a_k}{k} + \binom{a_{k-1}}{k-1}+\dots+\binom{a_i}{i}\text{ and }
\partial_{k-1}(n_1) = \binom{a_k}{k-1} + \binom{a_{k-1}}{k-2}+\dots+\binom{a_i}{i-1}
$$
with $a_k>a_{k-1}>\dots>a_i\geq i\geq 1$, and
$$
n_2 = \binom{b_k}{k} + \binom{b_{k-1}}{k-1}+\dots+\binom{b_j}{j}\text{ and }
\partial_{k-1}(n_2) = \binom{b_k}{k-1} + \binom{b_{k-1}}{k-2}+\dots+\binom{b_j}{j-1}
$$
with $b_k>b_{k-1}>\dots>b_j\geq j\geq 1$.
If $a_k = b_k$, the claim follows from the induction hypothesis applied to the two integers $n_1-\binom{a_k}{k} > n_2-\binom{a_k}{k}$.
If $a_k > b_k$, we have
\begin{align*}
\partial_{k-1}(n_1)
&\geq
\binom{a_k}{k-1} \\
&\geq \binom{b_k+1}{k-1} \\
&= \binom{b_k}{k-1} + \binom{b_k}{k-2} \\
&= \binom{b_k}{k-1} + \binom{b_k-1}{k-2} + \binom{b_k-1}{k-3} \\
&= \dots \\
&= \binom{b_k}{k-1} + \binom{b_k-1}{k-2} +\dots + \binom{b_k-k+j+1}{j} + \binom{b_k-k+j+1}{j-1} \\
&\geq \binom{b_k}{k-1} + \binom{b_{k-1}}{k-2} +\dots + \binom{b_{j+1}}{j} + \binom{b_{j+1}}{j-1} \\
&\geq \binom{b_k}{k-1} + \binom{b_{k-1}}{k-2} +\dots + \binom{b_{j+1}}{j} + \binom{b_j}{j-1} \\
&= \partial_{k-1}(n_2).
\end{align*}
Else, $a_k<b_k$. A similar chain of inequalities as above shows that $n_2\geq n_1$, which is impossible.
It follows that $\partial_{k-1}(n_1) \geq \partial_{k-1}(n_2)$, concluding the induction and our proof.
\end{proof}

\section*{Acknowledgments}
The author thanks Dominik Peters for useful discussions, the MathOverflow community for technical help, as well as an anonymous reviewer for helpful comments, and acknowledges support from the European Research Council (ERC) under grant number 639945 (ACCORD).

\bibliographystyle{named}
\bibliography{main}

\end{document}